\documentclass[letterpaper, 10 pt, conference]{ieeetran}  %
\IEEEoverridecommandlockouts                              %

\usepackage{graphicx}      
\usepackage{amsmath}
\usepackage{mathrsfs}
\usepackage{amsfonts}
\usepackage{amsthm}
\usepackage{amssymb}
\usepackage{color}

\usepackage{enumitem}
\usepackage{hyperref}

\def\rea{\mathbb{R}}

\newtheorem{proposition}{Proposition}
\newtheorem{remark}{Remark}
\newtheorem{theorem}{Theorem}
\newtheorem{assumption}{Assumption}
\newtheorem{corollary}{Corollary}

\begin{document}
\title{Tuning Rules for a Class of Passivity-Based Controllers for Mechanical Systems}
\author{Carmen Chan-Zheng, Pablo Borja, and Jacquelien M.A. Scherpen
	\thanks{Manuscript received on August 31, 2020; revised on November 10, 2020; accepted on December 3, 2020. The work of Carmen Chan-Zheng is supported by the University of Costa Rica. The authors are with the Jan C. Willems Center for Systems and 	Control, Engineering and Technology Institute Groningen, Faculty of Science and Engineering, University of Groningen,  9747 AG Groningen, The Netherlands (email: \texttt{c.chan.zheng@rug.nl,l.p.borja.rosales@rug.nl, j.m.a.scherpen@rug.nl}).}}

\maketitle

	\begin{abstract}
	In this paper, we propose several rules to tune the gains for a class of passivity-based controllers for nonlinear mechanical systems. Such tuning rules prescribe a desired \textit{local} transient response behavior to the closed-loop system. To establish the tuning rules, we implement a PID passivity-based controller. Then, we linearize the closed-loop system, and we transform the matrix of the resulting system into a class of \textit{saddle point matrices} to analyze the influence of the control gains, in terms of the oscillations and the rise time, on the transient response of the closed-loop system.  Hence, the resulting controllers stabilize the plant and simultaneously address the performance of the closed-loop system. Moreover, our analysis provides a clear insight into how the kinetic energy, the potential energy, and the damping of the mechanical system are related to its transient response, endowing in this way the tuning rules with a physical interpretation. Additionally, we corroborate the analytical results through the practical implementation of a controller that stabilizes a two degrees-of-freedom (DoF) planar manipulator, where the control gains are tuned following the proposed rules.
\end{abstract}

\section{Introduction}
New technological trends have created new control challenges in which current linear techniques are not adequate as the nonlinearities phenomena are no longer negligible. Nonetheless, in contrast to the linear methods, the development of a general framework to control nonlinear systems is still an open question. Thus, the current nonlinear control techniques are available only for special classes of systems. Furthermore, the vast majority of the nonlinear control methods only focus on the stability of the closed-loop system without providing any insight into how to tune the control gains, and consequently, disregarding some indicators of performance of the closed-loop system. Nevertheless, in several cases, it is essential to ensure a prescribed performance to solve a task at hand, e.g., applications involving physical systems that require high precision such as those found in aerospace, medical, semiconductor manufacturing, among other industries.

Passivity-based control (PBC) methods offer a constructive approach to control complex physical systems where the exchange of energy between the plant and the environment plays a central role \cite{ortega2013passivity}. Additionally, the gains of such controllers may be associated with the physical quantities of the closed-loop system, i.e., energy and damping. While there exist several references (e.g., \cite{van2000l2}, \cite{zhang2017pid}, \cite{borja2020}) where the authors implement PBC techniques to ensure that the closed-loop system exhibits a desired performance (e.g., $\mathcal{L}_2$ stability), the results related to gain tuning for removing oscillations and, or improving the rise time are scarce for PBC-methodologies.
Some works that we find in this line of research are \cite{jeltsema2004tuning} where the authors propose a methodology for tuning the damping gain in switched-mode power converters, \cite{dirksz2013tuning} where the authors improve the transient response of a class of mechanical system by modifying the initial conditions of a dynamic controller, and \cite{kotyczka2013local} where the author proposes a methodology--based on the linearization of the system--to tune an interconnection and damping assignment (IDA) PBC.

On the linear counterpart, several control design approaches exist such that the resulting closed-loop system exhibits a desired performance. In particular, for PID controllers, we find the Ziegler-Nichols methods for single-input single-output (SISO) systems \cite{ziegler1993optimum}. While for multiple-input multiple-output (MIMO) systems, we have diverse techniques as the ones reported in \cite{skogestad2003simple}, \cite{boyd2016mimo}, and \cite{ruiz2006robust}. However, some disadvantages of these methods include using heuristic approaches to derive the rules, employing a first or second-order time-delay model to approximate the real plant, or solving complex optimization problems that involve linear matrix inequalities.

Due to the simple structure of PID controllers and the suitability of PBC techniques to stabilize physical systems, recently, several authors have paid particular attention to the so-called PID-PBC approach (see \cite{zhang2017pid}, \cite{borja2020}, \cite{romero2017global}, \cite{jayawardhana2007passivity}). Some remarkable properties of PID-PBCs are that i) in contrast to other PBC techniques, the PID-PBC method does not require the solution of partial differential equations (PDEs), and ii) the tune of the control gains is trivialized for \textit{stability} purposes.	Nonetheless, to the best of the authors' knowledge, there is no available literature that provides guidelines on how to tune the gain of PID-PBCs to prescribe a desire performance in terms of oscillations, damping ratio, or rise time. 

Motivated by the lack of tuning guidelines for the PID-PBC technique, the main contribution of this manuscript is a set of tuning rules for the class of controllers described in \cite{zhang2017pid} and \cite{borja2020}, that prescribes a desired local behavior of the closed-loop system in terms of oscillations, damping ratio, or rise time. The rules are obtained by inspecting the linearized closed-loop system, where we analyze it via \textit{the saddle point matrix theory}. Hence, the resulting {PID-PBCs} are suitable to stabilize MIMO systems while ensuring a desired local transient response behavior of the closed-loop system. In contrast to the results reported in \cite{kotyczka2013local}, the stability analysis of the closed-loop system does not rely on the linearization of the system nor the solution of PDEs.

The remainder of this paper is structured as follows: in Section \ref{preliminaries}, we provide the preliminaries and the problem formulation. In Section \ref{tuningrules}, we present the main results of this paper.  In Section \ref{example}, we apply our tuning rules to control a 2DoF planar manipulator. We finalize the paper with some concluding remarks and future work in Section \ref{conclusion}.

\textbf{Notation}: We denote the $n\times n$ identity matrix as $I_n$ and the $n\times m$ matrix of zeros as $0_{n\times m}$. For a given smooth function $f:\mathbb{R}^n\to \mathbb{R}$, we define the differential operator $\nabla_x f:=(\frac{\partial f}{\partial x})^\top$ and $\nabla^2_x f:=\frac{\partial^2 f}{\partial x^2}$. For a smooth mapping $f:\mathbb{R}^n\to\mathbb{R}^m$, we define the $ij-$element of its $n\times m$ Jacobian matrix as $(\nabla_x f)_{ij}:=\frac{\partial f_j}{\partial x_i}$. When clear from the context the subindex in $\nabla$ is omitted. Given a distinguished element $x_\star \in \rea^n$, we define the constant matrix $B_\star:=B(x_\star)\in \rea^{n\times m}$.  
For a given matrix $A\in\mathbb{R}^{n\times n}$ and a vector $x\in\rea^{n}$, we say that $A$ is \textit{positive definite (semi-definite)}, denoted as $A>0$  ($A\geq0$), if $A=A^{\top}$ and $x^{\top}Ax>0$ ($x^{\top}Ax\geq0$) for all  $x\in \rea^{n}-\{0_{n} \} (\rea^{n}$). For a positive (semi-)definite matrix $A$,  we define the weighted Euclidean norm as $\lVert x \rVert_{A}:=\sqrt{x^{\top}Ax}$. For $A=A^\top$, we denote by $\lambda_{\min}(A)$ and $\lambda_{\max}(A)$ as the minimum eigenvalue and the maximum eigenvalue of $A$, respectively. Consider $y\in \mathbb{C}^n$, we denote by $y^*$ as the conjugate transpose vector of $y$.
\section{Preliminaries and problem setting} \label{preliminaries}
In this section, we summarize some properties of a class of \textit{saddle point matrices}, which are the cornerstone in the development of the tuning rules presented in Section \ref{tuningrules}. Although not mentioned explicitly in the pioneering work of Brayton and Moser \cite{brayton1964theory}, the authors use these properties to verify the behavior of the transient response.
Then, we provide the port-Hamiltonian (pH) representation of the class of mechanical systems for which our tuning rules are suitable. Finally, we describe some details of the PID-PBC implemented in this work. 

\subsection{Some properties of a class of saddle point matrices}
Consider the following linear system
\begin{equation}\label{sys}
	\begin{bmatrix} \dot{\tilde{x}}\\ \dot{\tilde{y}}\end{bmatrix}=-\Phi\begin{bmatrix}\tilde{x}\\ \tilde{y}\end{bmatrix}, \quad \Phi:=\begin{bmatrix}X&Z^\top\\-Z&Y\end{bmatrix} 
\end{equation}
where  $X \in \mathbb{R}^{n\times n}$ is positive-definite, $Y \in \mathbb{R}^{m\times m}$ is positive semi-definite, $Z \in \mathbb{R}^{m\times n}$ has full row rank, $\tilde{x} \in \mathbb{R}^n$, and $\tilde{y} \in \mathbb{R}^m$, where  $m\leq n$. The structure of $\Phi$ corresponds to \textit{a class of saddle point matrices}. Hence, the spectrum of  $\Phi$ can be analyzed via inspection of the spectrum of $X$, $Y$, and $Z$. In particular, we are interested in the results below in Theorem \ref{t1} and Corollary \ref{t2}.
\begin{theorem}[\cite{benzi2006eigenvalues}]\label{t1}
	Let $Y=0_{m \times m}$, $v\in\mathbb{C}^n$, $w\in\mathbb{C}^m$. Denote with $\lambda_{\Phi}$ and $(v^\top, w^\top)^\top$ an eigenvalue and an eigenvector of $\Phi$, respectively\footnote{The vector $w$ is not used explicitly in this manuscript but is essential to prove this theorem. See \cite{benzi2006eigenvalues} for further details.}. Then, $\lambda_{\Phi}$ is real if and only if:
	\begin{align}\label{condt1}
		\Bigg(\frac{v^* X v}{v^* v}\Bigg)^2\geq 4 \frac{v^*(Z^\top Z)v}{v^* v}.
	\end{align}
\end{theorem}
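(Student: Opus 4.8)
The plan is to collapse the $(n+m)$-dimensional eigenproblem for $\Phi$ into a single scalar quadratic in $\lambda_{\Phi}$ whose coefficients are guaranteed to be real, so that the reality of the eigenvalue becomes nothing more than a statement about the sign of a discriminant. First I would write the eigenvalue equation $\Phi(v^\top,w^\top)^\top=\lambda_{\Phi}(v^\top,w^\top)^\top$ in block form, using $Y=0_{m\times m}$, to obtain the pair
\begin{align}
Xv+Z^\top w&=\lambda_{\Phi}v,\label{e1}\\
-Zv&=\lambda_{\Phi}w.\label{e2}
\end{align}
Before manipulating these, I would record that $v\neq 0$: if $v=0$, then \eqref{e1} gives $Z^\top w=0$, and since $Z$ has full row rank, $Z^\top$ has full column rank, forcing $w=0$ and contradicting that $(v^\top,w^\top)^\top$ is an eigenvector. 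Hence $v^*v>0$ and the quotients appearing in \eqref{condt1} are well defined.

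Next I would eliminate $w$. Multiplying \eqref{e1} by $\lambda_{\Phi}$ and substituting $\lambda_{\Phi}w=-Zv$ from \eqref{e2} yields the identity
\begin{equation}
\lambda_{\Phi}Xv-Z^\top Zv=\lambda_{\Phi}^2 v,\nonumber
\end{equation}
which, conveniently, holds for every eigenvalue including $\lambda_{\Phi}=0$, so no separate case is needed. Taking the inner product with $v^*$ and dividing by $v^*v$ gives the scalar relation
\begin{equation}
\lambda_{\Phi}^2-a\,\lambda_{\Phi}+b=0,\qquad a:=\frac{v^*Xv}{v^*v},\quad b:=\frac{v^*(Z^\top Z)v}{v^*v}.\nonumber
\end{equation}
The decisive observation is that $a$ and $b$ are real: $X$ and $Z^\top Z$ are real symmetric, hence Hermitian, so $v^*Xv$ and $v^*(Z^\top Z)v=\norm{Zv}^2$ are real (indeed $a>0$ and $b\geq 0$). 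Thus $\lambda_{\Phi}$ is a root of a quadratic with real coefficients.

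The equivalence then follows from the discriminant. The two roots are $\tfrac{1}{2}\big(a\pm\sqrt{a^2-4b}\,\big)$, which are real exactly when $a^2-4b\geq 0$, that is, precisely condition \eqref{condt1}. For the reverse implication I would note that $\lambda_{\Phi}$ is one of these two roots, so whenever $a^2-4b\geq 0$ both roots, and in particular $\lambda_{\Phi}$, must be real. For the forward implication I would substitute a real $\lambda_{\Phi}$ into the quadratic to get $b=a\lambda_{\Phi}-\lambda_{\Phi}^2$, whence $a^2-4b=(a-2\lambda_{\Phi})^2\geq 0$. I expect the only genuinely delicate point to be the bookkeeping that certifies $a$ and $b$ are real and that this single scalar equation faithfully captures the full eigenvalue information; once the problem has been reduced to a real quadratic, both directions of the equivalence are immediate, and the full-row-rank hypothesis on $Z$ enters only to guarantee $v\neq 0$.
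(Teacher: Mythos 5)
Your proof is correct and follows essentially the same route as the paper's source: eliminating $w$ from the block eigenvalue equations to arrive at the real-coefficient quadratic \eqref{lambdaeq}, which is exactly the relation the paper imports from \cite{benzi2006eigenvalues} (and reuses in the proof of Proposition \ref{p2}), with reality of $\lambda_{\Phi}$ read off from the sign of the discriminant. Your side checks --- that $v\neq 0$ via the full row rank of $Z$, and that $a$ and $b$ are real because $X$ and $Z^\top Z$ are real symmetric --- match the standard argument, so nothing further is needed.
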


Furthermore, if Corollary $2.6$ of \cite{benzi2006eigenvalues} is satisfied, then, the entire spectrum of $\Phi$ is real.
\begin{corollary}[\cite{shen2010eigenvalue}]\label{t2}
	Let $Y=0_{m \times m}$. Denote with $\lambda_{\Phi} \in \mathbb{C}$ any eigenvalue of $\Phi$. Then, the following statements are true:
	\begin{enumerate}[label=\roman*),wide, labelwidth=!, labelindent=0pt]
		\item If $\Im(\lambda_{\Phi})\neq0$, then $				\frac{1}{2}\lambda_{\min}(X)\leq\Re(\lambda_{\Phi})\leq\frac{1}{2}\lambda_{\max}(X)
		$
		\item If $\Im(\lambda_{\Phi})=0$, then 
		\begin{equation*}\label{t2eq2}
			\min\{\lambda_{\min}(X),\lambda_{\min}(ZX^{-1}Z^\top)\}\leq\lambda_{\Phi}\leq\lambda_{\max}(X).
		\end{equation*}
	\end{enumerate}	
\end{corollary}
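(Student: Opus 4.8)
The plan is to start from the eigenvalue equation $\Phi\,(v^\top,w^\top)^\top=\lambda_\Phi\,(v^\top,w^\top)^\top$ with $Y=0_{m\times m}$, which splits into $Xv+Z^\top w=\lambda_\Phi v$ and $-Zv=\lambda_\Phi w$. First I would dispose of the degenerate cases: since $Z$ has full row rank, $v=0$ forces $Z^\top w=0$ and hence $w=0$, so every eigenvector has $v\neq0$; and a Schur-complement computation gives $\det\Phi=\det(X)\det(ZX^{-1}Z^\top)\neq0$, so $\lambda_\Phi\neq0$. This lets me substitute $w=-Zv/\lambda_\Phi$ into the first equation to obtain the quadratic eigenvalue relation $\lambda_\Phi^2 v-\lambda_\Phi Xv+Z^\top Zv=0$. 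Left-multiplying by $v^*$ and normalizing yields the scalar quadratic $\lambda_\Phi^2-p\,\lambda_\Phi+q=0$, where $p:=v^*Xv/v^*v$ and $q:=\|Zv\|^2/v^*v$. Because $X>0$ and $X=X^\top$, both coefficients are real, the Rayleigh quotient obeys $\lambda_{\min}(X)\le p\le\lambda_{\max}(X)$, and $q\ge0$.

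For statement i), I would observe that $\Im(\lambda_\Phi)\neq0$ is possible only when the discriminant $p^2-4q$ is negative, in which case $\lambda_\Phi=(p\pm i\sqrt{4q-p^2})/2$ and hence $\Re(\lambda_\Phi)=p/2$. The bound $\tfrac12\lambda_{\min}(X)\le\Re(\lambda_\Phi)\le\tfrac12\lambda_{\max}(X)$ then follows immediately from the Rayleigh-quotient bounds on $p$.

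For the real case ii), note first that, since the two real roots have sum $p>0$ and product $q\ge0$ while $\lambda_\Phi\neq0$, we have $\lambda_\Phi>0$. The upper bound is then direct: rewriting the quadratic as $\lambda_\Phi^2+q=p\,\lambda_\Phi$ with $q\ge0$ gives $\lambda_\Phi^2\le p\,\lambda_\Phi$, so $\lambda_\Phi\le p\le\lambda_{\max}(X)$. The lower bound is the crux, and I would argue by cases. If $\lambda_\Phi\ge\lambda_{\min}(X)$ there is nothing to prove. Otherwise $\lambda_\Phi<\lambda_{\min}(X)$, so $X-\lambda_\Phi I$ is positive definite; starting from $(X-\lambda_\Phi I)v=\tfrac1{\lambda_\Phi}Z^\top Zv$ I would set $s:=Zv$ (nonzero in this case, for otherwise $(X-\lambda_\Phi I)v=0$ would make $\lambda_\Phi$ an eigenvalue of $X$), solve $v=\tfrac1{\lambda_\Phi}(X-\lambda_\Phi I)^{-1}Z^\top s$, and left-multiply by $Z$ to reach $\lambda_\Phi\,s=Z(X-\lambda_\Phi I)^{-1}Z^\top s$.

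The key step is the monotonicity of the matrix inverse: since $\lambda_\Phi>0$ we have $X-\lambda_\Phi I\preceq X$ in the positive semidefinite order, hence $(X-\lambda_\Phi I)^{-1}\succeq X^{-1}$. Left-multiplying the previous identity by $s^*$ and using this gives $\lambda_\Phi\|s\|^2=s^*Z(X-\lambda_\Phi I)^{-1}Z^\top s\ge s^*ZX^{-1}Z^\top s\ge\lambda_{\min}(ZX^{-1}Z^\top)\|s\|^2$, so $\lambda_\Phi\ge\lambda_{\min}(ZX^{-1}Z^\top)$. Combining the two cases yields $\lambda_\Phi\ge\min\{\lambda_{\min}(X),\lambda_{\min}(ZX^{-1}Z^\top)\}$. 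I expect this inverse-monotonicity argument, together with correctly isolating the subcase $\lambda_\Phi<\lambda_{\min}(X)$ (where the reduction to $s=Zv$ becomes available), to be the main obstacle; the complex case and the upper bound fall out as routine consequences of the scalar quadratic.
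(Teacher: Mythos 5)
Your proof is correct, and it takes essentially the same route as the argument the paper relies on (it cites \cite{shen2010eigenvalue} rather than proving the corollary in-text, but the ingredients it later invokes in Proposition \ref{p2} are exactly yours): reduction of the eigenproblem to the scalar quadratic \eqref{lambdaeq}, the real-part identity \eqref{thea} for the complex case, and, for the real lower bound, the shifted-inverse identity $\lambda_\Phi s = Z(X-\lambda_\Phi I)^{-1}Z^\top s$ combined with inverse monotonicity of positive definite matrices. Your handling of the degenerate cases ($v\neq 0$ via full row rank of $Z$, $\lambda_\Phi\neq 0$ via the Schur complement) and the case split at $\lambda_{\min}(X)$, which is what makes $s=Zv\neq 0$ and $(X-\lambda_\Phi I)^{-1}$ available, matches the cited proof and leaves no gaps.
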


\subsection{PID-PBC for mechanical systems} 
Throughout this work, we consider mechanical systems that admit a pH representation of the form
\begin{equation}\label{phsys}
	\begin{split}
		&\begin{bmatrix}
			\dot{q}\\ \dot{p}
		\end{bmatrix}= \begin{bmatrix}
			0_{n\times n}&I_n \\-I_n&-D(q,p)
		\end{bmatrix}\nabla H(q,p)+ \begin{bmatrix}
			0_{n\times m}\\G
		\end{bmatrix}u,\\
		&H(q,p)=\frac{1}{2}p^\top M^{-1}(q)p + V(q), \quad y= G^{\top}\dot{q},
	\end{split}
\end{equation} 
where $q,p \in \mathbb{R}^n$ are the generalized positions and momenta vectors, respectively, $u\in \mathbb{R}^m$ is the control vector, $y\in \mathbb{R}^m$ is the passive output, $m\leq n$, ${D(q,p): \mathbb{R}^n\times\mathbb{R}^n \to \mathbb{R}^{n \times n}}$ is the positive semi-definite damping matrix, $H: \mathbb{R}^n\times\mathbb{R}^n \to \mathbb{R}$ is the Hamiltonian, $M(q):\mathbb{R}^n \to \mathbb{R}^{n \times n} $ is the positive definite mass-inertia matrix, $V(q):\mathbb{R}^n\to \mathbb{R}$ is the potential energy of the system, and $G\in \mathbb{R}^{n\times m}$ is the constant input matrix, where $\mbox{rank}(G)=m$.

To formulate the problem under study, we identify the set of assignable equilibria for \eqref{phsys} given by
\begin{equation*}
	\mathcal{E} = \left\lbrace (q,p)\in\rea^{n}\times\rea^{n} \mid G^{\perp}\nabla V(q) = 0_{n}, p=0_{n}\right\rbrace
\end{equation*}
where $G^\perp$ is the (full rank) left annihilator of $G$. Then, we consider the PID-PBCs
\begin{equation}
	u=-K_Py-K_I(\gamma(q)+\kappa)-K_D\dot{y} \label{pid}
\end{equation} 
where the gains $K_P,K_I,K_D \in \mathbb{R}^{m\times m}$ satisfy $K_P,K_I>0$, and $K_D\geq0$, $\gamma(q):=G^{\top}q$, and $\kappa \in \mathbb{R}^m$ is a constant vector that is used to assign the equilibrium for the closed-loop system. Hence, the closed-loop system \eqref{phsys}-\eqref{pid} takes the form
\begin{equation}
	\begin{bmatrix}
		\dot{q}\\ \dot{p}
	\end{bmatrix}=\Upsilon^{-1}(q)F(q,p)\Upsilon^{-\top}(q)\nabla H_d(q,p) \label{phsyscl}
\end{equation} 
with \footnote{The closed-loop system \eqref{phsyscl} preserves the mechanical structure such as the IDA-PBC procedure (see \cite{GOMEZESTERN2004451}).}
\vspace{-2pt}
\begin{align}\label{clmatrices}
	H_d(q,p)&:=H(q,p)+\frac{1}{2}\lVert\gamma(q)+\kappa\rVert^2_{K_I}+\frac{1}{2}\lVert y \rVert^{2}_{K_{D}},\nonumber\\
	\Upsilon(q)&:=\begin{bmatrix}I_{n} & 0_{n\times n} \\ GK_D(\nabla_q y)^\top & I_{n}+GK_{D}G^{\top}M^{-1}(q)\end{bmatrix},\nonumber\\
	F(q,p)&:=\begin{bmatrix}0_{n\times n}&I_n\\-I_n & -D(q,p)-GK_PG^\top\end{bmatrix}.
\end{align}
Now, we formulate the problem under study as follows.

\textbf{Problem setting:} given $(q_{\star},0_{n})\in\mathcal{E}$, propose a method to choose the gains $K_P,K_I$, and $K_D$ of \eqref{pid} such that \eqref{phsyscl} has a \textit{local} transient response behavior that does not exhibit oscillations, and a prescribed damping ratio or rise time.

To develop the subsequent section, we introduce the following assumption:
\begin{assumption}\label{ass}
	The desired hamiltonian $H_d$ has a local isolated minimum at ${x_\star:=(q_\star,0)}$, i.e., ${x_\star=\arg\min H_d(q,p)}$.
\end{assumption}
\vspace{1mm}
\begin{remark}
	Assumption \ref{ass} implies that \eqref{phsyscl} has a stable equilibrium point at $x_\star$. Asymptotic stability follows if the output $y$ is detectable for the closed-loop system. Furthermore, \eqref{pid} defines an \textit{output strictly passive} operator $y \mapsto -u$. Hence, the application of the Passivity Theorem ensures that the closed-loop system is $\mathcal{L}_2$ stable (see \cite{van2000l2}). See \cite{zhang2017pid}, \cite{borja2020} for further details about the stability analysis of \eqref{phsyscl}. 
\end{remark}
\begin{remark}
	As indicated in \cite{borja2020}, the implementation of the term $K_{D}\dot{y}$ is subject to two conditions, namely, (i) $y$ must be of \textit{relative degree} one, and (ii) $u$ can be expressed as a function of the state vector without singularities. However, for systems of the form \eqref{phsys}, the mentioned conditions are always verified, where (i) is directly satisfied, and some simple computations show that the controller has no singularities since the matrix $\Psi(q):=I_{m} + K_{D}G^{\top}M^{-1}(q)G$ has full rank for every $K_{D}\geq 0$.
\end{remark}
\begin{remark}
The main difference between PID-PBCs and classical PID controllers is that the former are constructed around the \textit{passive output signal}, while the latter are designed in terms of an \textit{error signal}. In particular, for mechanical systems, the passive output signal is given by the actuated velocities, i.e., $G^\top\dot{q}$, while the error signal used to design classical PID controllers is given in terms of position. Some simple computations show that, for fully actuated mechanical systems, a PI-PBC scheme coincides with a classical PD controller.
\end{remark}

\vspace{-1mm}
\section{Tuning rules}\label{tuningrules}
In this section, we describe our approach to obtain the tuning rules. To facilitate the analysis, we linearize the system and convert the drift vector field into a class of saddle point matrices by similarity transformation. The main benefit of this particular form is that this reveals a clear relationship between the damping, the potential energy, and the kinetic energy, which is used later to propose the tuning rules. 
\subsection{Linearizing and obtaining the saddle point form}
To obtain the linearized dynamics of \eqref{phsyscl}, we introduce the vectors $\tilde{q}:=q-q_\star,\ \tilde{p}:=p.$

Consequently, the linearized system around the equilibrium point $(q_\star,0)$ corresponds to:
\begin{align}\label{phsyslin}
	\begin{bmatrix}
		\dot{\tilde{q}}\\ \dot{\tilde{p}}
	\end{bmatrix}=\Upsilon_\star^{-1}F_\star\Upsilon_\star^{-\top}\nabla^2 H_{d\star} 	\begin{bmatrix}
		\tilde{q}\\\tilde{p}
	\end{bmatrix}
\end{align}
where $\Upsilon$ and $F$ are defined as in \eqref{clmatrices}.
Then, to obtain the saddle point, we define 
\begin{align}\label{def1}
	\begin{split}
		\mathcal{R}:&=GK_PG^\top + D_\star \\
		\mathcal{P}:&=GK_IG^\top + \nabla^2 V_\star\\
		\mathcal{W}:&= GK_DG^\top + M_\star
	\end{split}
\end{align}
and $\phi_P,\phi_W\in \mathbb{R}^{n\times n}$ are full rank matrices satisfying \footnote{$\phi_P$ and $\phi_M$ are unique matrices obtained from the Cholesky decomposition.}
\begin{align}\label{phiM}
	\mathcal{W}^{-1}=\phi_W^\top \phi_W,\quad \mathcal{P}=\phi_P^\top \phi_P.
\end{align}
Subsequently, we define the similarity transformation matrix $T\in \mathbb{R}^{2n\times2n}$ and new coordinates $z\in \mathbb{R}^{2n}$ such that \footnote{The spectrum is invariant to similarity transformation.}
\begin{align}
	T:=\begin{bmatrix}
		0_{n\times n}&\phi_W^{-\top}M^{-1}_\star\\
		\phi_P&0_{n\times n}
	\end{bmatrix},\quad z:=T\begin{bmatrix}\tilde{q}\\\tilde{p}\end{bmatrix}.
\end{align}
Therefore, the linearized system in $z$ corresponds to
\begin{align}\label{syssaddle}
	\dot{z} = -\mathcal{N} z, \quad \mathcal{N}:=\begin{bmatrix}
		\phi_W \mathcal{R}\phi_W^\top&\phi_W\phi_P^\top\\
		-\phi_P\phi_W^\top & 0_{n\times n}
	\end{bmatrix}.
\end{align}
where $\mathcal{N}$ belongs to a class of saddle point matrices as \eqref{sys} with $X:=\phi_W \mathcal{R}\phi_W^\top$, $Z:=\phi_P\phi_W^\top$ and $Y:=0_{n\times n}$.

Finally, to characterize the eigenvalues of $\mathcal{N}$, denote with $(\lambda_{\mathcal{N}},v)$ an eigenpair of \eqref{syssaddle} with $\lambda_{\mathcal{N}} \in \mathbb{C}$ and $v \in \mathbb{C}^n$. Then, $\lambda_{\mathcal{N}}$ is given by the following expression (see \cite{benzi2006eigenvalues}):
\begin{align}\label{eig}
	\lambda_{\mathcal{N}}:=\frac{1}{2}\begin{pmatrix}
		\frac{v^*\phi_W \mathcal{R}\phi_W^\top v}{v^*v}\pm\sqrt{\Big(\frac{v^*\phi_W \mathcal{R}\phi_W^\top v}{v^*v}\Big)^2-4\frac{v^*\phi_W \mathcal{P}\phi_W^\top v}{v^*v}}
	\end{pmatrix}.
\end{align}

The terms $\mathcal{R},\mathcal{P}$, and $\mathcal{W}$ are associated with the damping injection, the potential energy, and the kinetic energy, respectively. In the sequel, we propose three conditions on the relation between these terms. When these conditions hold, the closed-loop system exhibits the specified --or desired-- transient response.  On the other hand, note that $\mathcal{R}$, $\mathcal{P}$, and $\mathcal{W}$ are related to the gain matrices $K_P$, $K_I$, and $K_D$ via \eqref{def1}. Hence, we can design the gain matrices such that the mentioned conditions are verified. 

\subsection{Removing the overshoot}
The oscillations of the transient response are characterized by the dominant pair of complex-conjugated poles of the system. The peak of such oscillations corresponds to the maximum overshoot of the system \cite{kulakowski2007dynamic}. Here, we provide a condition such that system \eqref{phsyslin} presents a “no-overshoot” response. In other words, the matrix $\mathcal{N}$ from system \eqref{phsyslin} must contain only real spectrum.

From Theorem \ref{t1}, an eigenvalue of $\mathcal{N}$ is real if and only if condition \eqref{condt1} holds, that is, the discriminant of \eqref{eig} is nonnegative. Then, to extend condition \eqref{condt1} to all the eigenvalues of $\mathcal{N}$, we propose the following:
\begin{proposition}\label{p1}
	The spectrum of the system \eqref{phsyslin} is real and nonnegative if the following is satisfied:
	\begin{align}\label{p1eq}
		\begin{split}
			4\lambda_{max}(\mathcal{P})\lambda_{max}(\mathcal{W})\leq\lambda_{min}(\mathcal{R})^2
		\end{split}
	\end{align}
\end{proposition}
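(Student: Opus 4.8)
The plan is to reduce the claim to showing that \eqref{p1eq} forces the discriminant appearing in \eqref{eig} to be nonnegative for \emph{every} $v\in\mathbb{C}^n-\{0_n\}$; since this in particular holds for every eigenvector, reality of the spectrum follows from Theorem~\ref{t1}, and nonnegativity will then follow from the signs of the coefficients in \eqref{eig}. First I would record that, because $Z=\phi_P\phi_W^\top$ and $\mathcal{P}=\phi_P^\top\phi_P$, one has $Z^\top Z=\phi_W\mathcal{P}\phi_W^\top$, so that the reality condition \eqref{condt1} for $\mathcal{N}$ reads $a(v)^2\geq 4\,b(v)$ with $a(v):=\frac{v^*\phi_W\mathcal{R}\phi_W^\top v}{v^*v}$ and $b(v):=\frac{v^*\phi_W\mathcal{P}\phi_W^\top v}{v^*v}$. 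I would also note that \eqref{p1eq} is self-justifying for applicability of Theorem~\ref{t1}: since $\mathcal{P},\mathcal{W}>0$ the right-hand side $4\lambda_{\max}(\mathcal{P})\lambda_{\max}(\mathcal{W})$ is positive, forcing $\lambda_{\min}(\mathcal{R})^2>0$; as $\mathcal{R}=GK_PG^\top+D_\star\geq 0$ is symmetric, this yields $\mathcal{R}>0$, hence $X=\phi_W\mathcal{R}\phi_W^\top>0$, while $Z=\phi_P\phi_W^\top$ is invertible and therefore of full row rank.

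The crux is to bound $a(v)$ and $b(v)$ in a \emph{coordinated} way. Introducing $w:=\phi_W^\top v$ (nonzero since $\phi_W$ is full rank) and the common factor $c(v):=\frac{w^*w}{v^*v}=\frac{v^*\phi_W\phi_W^\top v}{v^*v}>0$, I would write $a(v)=\frac{w^*\mathcal{R}w}{w^*w}\,c(v)\geq\lambda_{\min}(\mathcal{R})\,c(v)$ and $b(v)=\frac{w^*\mathcal{P}w}{w^*w}\,c(v)\leq\lambda_{\max}(\mathcal{P})\,c(v)$. Substituting these, the target inequality $a(v)^2\geq 4\,b(v)$ is implied by $\lambda_{\min}(\mathcal{R})^2\,c(v)^2\geq 4\,\lambda_{\max}(\mathcal{P})\,c(v)$, and dividing by the positive scalar $c(v)$ reduces this to $\lambda_{\min}(\mathcal{R})^2\,c(v)\geq 4\,\lambda_{\max}(\mathcal{P})$. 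Finally, since $\phi_W$ is square and invertible, $\phi_W\phi_W^\top$ is similar to $\phi_W^\top\phi_W=\mathcal{W}^{-1}$, so $c(v)\geq\lambda_{\min}(\phi_W\phi_W^\top)=\lambda_{\min}(\mathcal{W}^{-1})=1/\lambda_{\max}(\mathcal{W})$; hence \eqref{p1eq} guarantees $\lambda_{\min}(\mathcal{R})^2\,c(v)\geq\lambda_{\min}(\mathcal{R})^2/\lambda_{\max}(\mathcal{W})\geq 4\,\lambda_{\max}(\mathcal{P})$ for all $v$, establishing reality.

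For nonnegativity I would argue from \eqref{eig} directly: once the discriminant is nonnegative, the two eigenvalues attached to a given $v$ are the real roots of $\lambda^2-a(v)\lambda+b(v)=0$. Their sum $a(v)$ is positive because $\mathcal{R}>0$ gives $\phi_W\mathcal{R}\phi_W^\top>0$, and their product $b(v)$ is positive because $\mathcal{P}=\phi_P^\top\phi_P>0$. Two real numbers with positive sum and positive product are both positive, so every eigenvalue of $\mathcal{N}$ satisfies $\lambda_{\mathcal{N}}>0$, which, by invariance of the spectrum under the similarity transformation $T$, is the spectrum of \eqref{phsyslin}.

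The step I expect to be the main obstacle is the coordinated bounding in the second paragraph. Bounding $a(v)$ and $b(v)$ \emph{separately} by $\lambda_{\min}(\phi_W\mathcal{R}\phi_W^\top)$ and $\lambda_{\max}(\phi_W\mathcal{P}\phi_W^\top)$ would produce a condition that does not collapse into the clean product $\lambda_{\max}(\mathcal{P})\lambda_{\max}(\mathcal{W})$ of \eqref{p1eq}. Retaining the shared Rayleigh factor $c(v)$, cancelling exactly one power of it (which is legitimate precisely because $b(v)\geq 0$), and only then invoking the identity $\lambda_{\min}(\phi_W\phi_W^\top)=1/\lambda_{\max}(\mathcal{W})$ is what yields the stated inequality; getting this bookkeeping right, and keeping the same $v$ throughout so the factor is genuinely common, is the delicate part.
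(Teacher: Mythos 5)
Your proof is correct and takes essentially the same route as the paper's: the substitution $w=\phi_W^\top v$ (the paper's $\eta$) followed by Rayleigh-quotient bounds, where your factor $c(v)\geq \lambda_{\min}(\phi_W\phi_W^\top)=1/\lambda_{\max}(\mathcal{W})$ is exactly the paper's identity $v^*v=\eta^*\mathcal{W}\eta$ bounded by $\lambda_{\max}(\mathcal{W})$, written in different bookkeeping. The only cosmetic differences are that you explicitly verify the hypotheses of Theorem~\ref{t1} (noting \eqref{p1eq} forces $\mathcal{R}>0$, hence $X>0$ and $Z$ invertible, a point the paper leaves implicit) and settle nonnegativity via the sign of the sum and product of the roots of the quadratic rather than the paper's real-part bound --- both harmless variations.
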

\begin{proof}
	Let $\eta:=\phi_W^\top v$, then, expression \eqref{condt1} can be rewritten as
	\begin{align}
		4\frac{\eta^*\mathcal{P}\eta}{\eta^*\eta}\frac{\eta^*\mathcal{W}\eta}{\eta^*\eta}\leq\Bigg(\frac{\eta^*\mathcal{R} \eta}{\eta^*\eta}\Bigg)^2.
	\end{align}
	Therefore, if condition \eqref{p1eq} holds, then inequality \eqref{condt1} is satisfied for any $\lambda$ since
	\begin{align}
		\begin{split}
			4\frac{\eta^*\mathcal{P}\eta}{\eta^*\eta}\frac{\eta^*\mathcal{W}\eta}{\eta^*\eta}&\leq 4\lambda_{max}(\mathcal{P})\lambda_{max}(\mathcal{W})\\
			&\leq(\lambda_{min}(\mathcal{R}))^2\leq\Bigg(\frac{\eta^*\mathcal{R} \eta}{\eta^*\eta}\Bigg)^2.
		\end{split}
	\end{align}
	Any eigenvalue of $\mathcal{N}$ is characterized by expression \eqref{eig}, then it follows that
	\begin{align}
		\begin{split}
			0\leq\frac{\lambda_{\min}(\mathcal{R})}{\lambda_{\max}(\mathcal{W})}\leq\frac{v^*\phi_W \mathcal{R}\phi_W^\top v}{v^*v} \implies \Re(\lambda)\leq0.
		\end{split}
	\end{align}
\end{proof}
\begin{remark}
	The equality case in \eqref{p1eq} corresponds to a critical damped response. 
\end{remark}

\subsection{Prescribing the bounds for the damping ratio} 
The tuning rule provided with Proposition \ref{p1} might be restrictive for some applications that need a faster rise time. However, this is usually achieved at the expense of a transient response with overshoot and oscillations. If this performance is acceptable, we propose a rule to improve the rise time by tuning the bounds of the damping ratio of the spectrum of \eqref{phsyslin}. 

Denote with $\lambda_{\mathcal{N}} \in \mathbb{C}$  \textit{any} eigenvalue of $\mathcal{N}$, then, the standard definition of the damping ratio of $\lambda_{\mathcal{N}}$ is given by \cite{aastrom2010feedback}:
\begin{align}\label{dampratio}
	\zeta_\mathcal{N}:=\frac{|\Re(\lambda_{\mathcal{N}})|}{\sqrt{\Re(\lambda_{\mathcal{N}})^2+\Im(\lambda_{\mathcal{N}})^2}}
\end{align}
where $0\leq\zeta_\mathcal{N}\leq1$.

From \eqref{dampratio}, note that the damping ratio of the spectrum of $\mathcal{N}$ belongs to the interval $[0,1]$, which is conservative.  We can rewrite the definition of \eqref{dampratio} in terms of $\mathcal{R},\mathcal{P}$ and $\mathcal{W}$ and provide less conservative bounds, i.e.,

\begin{proposition}\label{p2}
	Denote with $(\lambda_{\mathcal{N}},v)$ \textit{any} eigenpair of \eqref{syssaddle} with $\lambda_{\mathcal{N}} \in \mathbb{C}$ and $v \in \mathbb{R}^n$, then, the damping ratio of $\lambda_{\mathcal{N}}$ is given by
	\begin{align}\label{zeta}
		\zeta_\mathcal{N} := \frac{1}{2}\frac{v^*Xv}{v^*v}\Bigg(\sqrt{\frac{v^*Z^\top Z v}{v^*v}}\Bigg)^{-1}
	\end{align}
	where this is bounded by
	\begin{align}\label{dampbound}
		\zeta_{\min} \leq \zeta_\mathcal{N}^2 \leq \zeta_{\max},
	\end{align}
	where
	\begin{align}
		\begin{split}
			\zeta_{\min}:=\max\Bigg\{0,\frac{1}{4}\frac{\lambda_{min}(\mathcal{R})^2}{\lambda_{max}(\mathcal{W})\lambda_{max}(\mathcal{P})}\Bigg\}\\
			\zeta_{\max}:=\min\Bigg\{1,\frac{1}{4}\frac{\lambda_{max}(\mathcal{R})^2}{\lambda_{min}(\mathcal{W})\lambda_{min}(\mathcal{P})} \Bigg\}.
		\end{split}
	\end{align}
\end{proposition}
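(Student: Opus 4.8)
The plan is to first derive the closed form \eqref{zeta} and then bound it. Observe that with $X = \phi_W \mathcal{R}\phi_W^\top$ and, using $\mathcal{P} = \phi_P^\top\phi_P$ together with $Z = \phi_P\phi_W^\top$, also $Z^\top Z = \phi_W\mathcal{P}\phi_W^\top$, the eigenvalue expression \eqref{eig} reads $\lambda_{\mathcal{N}} = \tfrac12\big(a \pm \sqrt{a^2 - 4b}\big)$, where $a := \frac{v^* X v}{v^* v}$ and $b := \frac{v^* Z^\top Z v}{v^* v}$ are real (and $b>0$ since $Z$ is invertible). I would treat the oscillatory case $\Im(\lambda_{\mathcal{N}}) \neq 0$, i.e.\ $a^2 < 4b$: then $\Re(\lambda_{\mathcal{N}}) = a/2$ and $\Im(\lambda_{\mathcal{N}}) = \pm\tfrac12\sqrt{4b-a^2}$, so that $\Re(\lambda_{\mathcal{N}})^2 + \Im(\lambda_{\mathcal{N}})^2 = b$. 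Substituting into the definition \eqref{dampratio} collapses the denominator to $\sqrt{b}$ and yields exactly \eqref{zeta}. For a real eigenvalue the definition forces $\zeta_{\mathcal{N}} = 1$; this case is absorbed by the clipping in the final bounds, so no separate treatment of \eqref{zeta} is required there.

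The key manipulation for the bounds is the same change of variables used in the proof of Proposition \ref{p1}, namely $\eta := \phi_W^\top v$. The essential observation is that, since \eqref{phiM} gives $\mathcal{W} = (\phi_W^\top\phi_W)^{-1}$, the normalization transforms as $v^* v = \eta^*\mathcal{W}\eta$, while $v^* X v = \eta^*\mathcal{R}\eta$ and $v^* Z^\top Z v = \eta^*\mathcal{P}\eta$. Hence $a$ and $b$ become the generalized Rayleigh quotients $a = \frac{\eta^*\mathcal{R}\eta}{\eta^*\mathcal{W}\eta}$ and $b = \frac{\eta^*\mathcal{P}\eta}{\eta^*\mathcal{W}\eta}$, and squaring \eqref{zeta} gives
\begin{equation*}
\zeta_{\mathcal{N}}^2 = \frac{a^2}{4b} = \frac{1}{4}\frac{(\eta^*\mathcal{R}\eta)^2}{(\eta^*\mathcal{W}\eta)(\eta^*\mathcal{P}\eta)},
\end{equation*}
which crucially no longer contains the factor $v^* v$.

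From here the bounds follow by the Rayleigh--Ritz inequalities applied to each quadratic form separately. For the upper bound I would estimate the numerator by $(\eta^*\mathcal{R}\eta)^2 \leq \lambda_{\max}(\mathcal{R})^2(\eta^*\eta)^2$ and the denominator from below by $(\eta^*\mathcal{W}\eta)(\eta^*\mathcal{P}\eta) \geq \lambda_{\min}(\mathcal{W})\lambda_{\min}(\mathcal{P})(\eta^*\eta)^2$; the common factor $(\eta^*\eta)^2$ cancels and leaves $\tfrac14\lambda_{\max}(\mathcal{R})^2/[\lambda_{\min}(\mathcal{W})\lambda_{\min}(\mathcal{P})]$. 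Reversing the directions gives the lower bound $\tfrac14\lambda_{\min}(\mathcal{R})^2/[\lambda_{\max}(\mathcal{W})\lambda_{\max}(\mathcal{P})]$. Finally, since \eqref{dampratio} guarantees $0 \leq \zeta_{\mathcal{N}} \leq 1$ for every eigenvalue (complex or real), intersecting the two estimates with $[0,1]$ produces the $\max\{0,\cdot\}$ and $\min\{1,\cdot\}$ appearing in $\zeta_{\min}$ and $\zeta_{\max}$, which establishes \eqref{dampbound}.

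I expect the main obstacle to be the bookkeeping of the change of variables — specifically, recognizing that $v^* v = \eta^*\mathcal{W}\eta$ rather than $\eta^*\eta$, which is precisely what removes the otherwise intractable normalization and recasts all three terms as quadratic forms in the single vector $\eta$. A secondary subtlety is that bounding the ratio of three coupled quadratic forms requires estimating numerator and denominator in opposite directions simultaneously; this is legitimate because each form is bounded pointwise in $\eta$ by its extreme eigenvalues, independently of the others.
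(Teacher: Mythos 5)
Your proposal is correct and follows essentially the same route as the paper's proof: derive \eqref{zeta} from the characteristic quadratic \eqref{lambdaeq} (equivalently from \eqref{eig}), change variables $\eta=\phi_W^\top v$ so that $\zeta_{\mathcal{N}}^2=\tfrac{1}{4}(\eta^*\mathcal{R}\eta)^2/[(\eta^*\mathcal{W}\eta)(\eta^*\mathcal{P}\eta)]$, and apply Rayleigh--Ritz estimates to each quadratic form before intersecting with $[0,1]$. If anything, you are more explicit than the paper on two points it glosses over: that \eqref{zeta} is really derived under $\Im(\lambda_{\mathcal{N}})\neq 0$ (the paper silently relies on \eqref{thea}, which holds only in that case), and the bookkeeping $v^*v=\eta^*\mathcal{W}\eta$, $v^*Xv=\eta^*\mathcal{R}\eta$, $v^*Z^\top Zv=\eta^*\mathcal{P}\eta$ behind the substitution.
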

\begin{proof}
	From the proof of Corollary \ref{t2} (see \cite{shen2010eigenvalue}), we have that:
	\begin{align}\label{thea}
		\Re(\lambda_{\mathcal{N}})=\frac{1}{2}\frac{v^*Xv}{v^*v}.
	\end{align}
	Note that expression \eqref{eig} follows from solving the quadratic equation (see \cite{benzi2006eigenvalues})
	\begin{align}\label{lambdaeq}
		\lambda_{\mathcal{N}}^2-\frac{ v^* X v}{v^* v}\lambda_{\mathcal{N}}+ \frac{v^* Z^\top Z v}{v^* v}=0.
	\end{align}
	Substituting \eqref{thea} in \eqref{lambdaeq} yields
	\begin{align}\label{theb}
		\frac{v^*Z^\top Z v}{v^*v}=\Re(\lambda_{\mathcal{N}})^2+\Im(\lambda_{\mathcal{N}})^2,
	\end{align}
	Then, expression \eqref{zeta} follows from substituting \eqref{thea} and \eqref{theb} in \eqref{dampratio}. By rewriting \eqref{zeta} we have that
	\begin{align}
		\zeta^2&=\frac{1}{4}\frac{(\eta^* \mathcal{R} \eta)^2}{(\eta^*\mathcal{W}\eta)( \eta^* \mathcal{P} \eta)},
	\end{align}
	therefore, \eqref{dampbound} follows from:
	\begin{align*}
		\begin{split}
			\frac{\lambda_{min}(\mathcal{R})^2}{\lambda_{max}(\mathcal{W})\lambda_{max}(\mathcal{P})}\leq\frac{(\eta^* \mathcal{R} \eta)^2}{(\eta^*\mathcal{W}\eta)( \eta^* \mathcal{P} \eta)}\leq \frac{\lambda_{max}(\mathcal{R})^2}{\lambda_{min}(\mathcal{W})\lambda_{min}(\mathcal{P})}.\\
		\end{split}
	\end{align*}
\end{proof} 
\subsection{Prescribing the upper bound for the rise time}
In this section, we proceed to characterize the upper bound of the rise time for system \eqref{phsyslin} based on the work of \cite{shen2010eigenvalue}. We define the rise time $t_r\in \mathbb{R}_+$ as the time taken by the system to reach $98\%$ of its steady state value. The rise time is influenced directly by the real part of the pole closest to the imaginary axis. Consider the following three scenarios:
\begin{enumerate}[label=\textbf{S\arabic*:},wide, labelwidth=!, labelindent=0pt]
	\item The spectrum of $\mathcal{N}$ is purely real.
	\item All the elements of the spectrum of $\mathcal{N}$ have an imaginary part different from zero.
	\item Some elements of the spectrum of $\mathcal{N}$ are purely real, and some elements are complex (imaginary part different from zero).
\end{enumerate}
Based on this premise, we define $t_{ru} \in \mathbb{R}_+ $ as the \textit{nominal rise time} of the system, then, we propose the following:
\begin{proposition}\label{p3}
	Denote with $\Re(\lambda_u)$ the lower bound for the real part of the spectrum of $\mathcal{N}$. Then, the rise time of the response of \eqref{syssaddle} is bounded from above by $t_{ru} \in \mathbb{R}_+$ where this is defined as
	\begin{align}\label{p3eq}
		\begin{split}
			t_{ru}:=\frac{4}{\Re(\lambda_u) }
		\end{split}
	\end{align}
	where $\Re(\lambda_u)$ is given by
	\begin{align}\label{limitp1}
		\Re(\lambda_u)=
		\begin{cases}
			\min\{\lambda_{\min}(\mathcal{W}^{-1}\mathcal{R}),\lambda_{\min}(\mathcal{R}^{-1}\mathcal{P})\} \ &\text{if S1}\\
			\frac{1}{2}\lambda_{\min}(\mathcal{W}^{-1}\mathcal{R}) \ &\text{if S2}\\
			\min\{\frac{1}{2}\lambda_{\min}(\mathcal{W}^{-1}\mathcal{R}),\lambda_{\min}(\mathcal{R}^{-1}\mathcal{P})\}, \ &\text{if S3.}\\
		\end{cases}
	\end{align}
\end{proposition}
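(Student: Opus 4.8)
The plan is to translate the spectral bounds of Corollary \ref{t2} from the coordinates $(X,Z)$ of the saddle point matrix $\mathcal{N}$ back to the physically meaningful matrices $\mathcal{R}$, $\mathcal{P}$, $\mathcal{W}$, and then to convert the resulting bound on the real part of the slowest mode into a bound on the rise time via an exponential-envelope argument. First I would justify formula \eqref{p3eq}. Since \eqref{syssaddle} is a stable linear system, every component of its response is a superposition of modes $e^{-\lambda_{\mathcal{N}} t}$, so the envelope of the transient decays like $e^{-\Re(\lambda_u) t}$, where $\Re(\lambda_u)$ is the smallest real part in the spectrum of $\mathcal{N}$, i.e. the pole closest to the imaginary axis, which dominates the transient. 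Imposing that this envelope has decayed to within $2\%$ of its final value gives $e^{-\Re(\lambda_u) t_r} = 0.02$, that is $t_r = \ln(50)/\Re(\lambda_u) \approx 4/\Re(\lambda_u)$, which yields \eqref{p3eq} as an upper bound on the rise time.

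It remains to compute $\Re(\lambda_u)$ in each scenario, which is where Corollary \ref{t2} enters. Recall from \eqref{syssaddle} that $X = \phi_W\mathcal{R}\phi_W^\top$ and $Z = \phi_P\phi_W^\top$. Because $\phi_W$ and $\phi_P$ are full rank, the similarity observation that $AB$ and $BA$ share the same spectrum gives, using \eqref{phiM},
\begin{equation}\label{simtrick}
	\operatorname{spec}(X) = \operatorname{spec}(\mathcal{R}\phi_W^\top\phi_W) = \operatorname{spec}(\mathcal{W}^{-1}\mathcal{R}),
\end{equation}
and, since $X^{-1} = \phi_W^{-\top}\mathcal{R}^{-1}\phi_W^{-1}$,
\begin{equation}\label{simtrick2}
	ZX^{-1}Z^\top = \phi_P\mathcal{R}^{-1}\phi_P^\top, \qquad \operatorname{spec}(ZX^{-1}Z^\top) = \operatorname{spec}(\mathcal{R}^{-1}\mathcal{P}).
\end{equation}
Each of these matrices is similar to a symmetric one, so its spectrum is real (and nonnegative under the standing stability hypothesis), and the quantities $\lambda_{\min}(\mathcal{W}^{-1}\mathcal{R})$ and $\lambda_{\min}(\mathcal{R}^{-1}\mathcal{P})$ are well defined.

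Next I would apply Corollary \ref{t2} case by case. In \textbf{S1} the spectrum is purely real, so part (ii) gives $\lambda_{\mathcal{N}} \geq \min\{\lambda_{\min}(X), \lambda_{\min}(ZX^{-1}Z^\top)\}$ for every eigenvalue; substituting \eqref{simtrick}--\eqref{simtrick2} recovers the first branch of \eqref{limitp1}. In \textbf{S2} every eigenvalue has nonzero imaginary part, so part (i) gives $\Re(\lambda_{\mathcal{N}}) \geq \tfrac{1}{2}\lambda_{\min}(X) = \tfrac{1}{2}\lambda_{\min}(\mathcal{W}^{-1}\mathcal{R})$, the second branch. In \textbf{S3} I would take the minimum of the two preceding lower bounds over the real and the complex eigenvalues, respectively; since the spectrum is nonnegative we have $\tfrac{1}{2}\lambda_{\min}(\mathcal{W}^{-1}\mathcal{R}) \leq \lambda_{\min}(\mathcal{W}^{-1}\mathcal{R})$, so the real-eigenvalue contribution simplifies and the overall minimum collapses to the third branch of \eqref{limitp1}.

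I expect the spectral identities \eqref{simtrick}--\eqref{simtrick2} to be routine once the $AB\sim BA$ observation is invoked; the genuinely delicate point is the justification of \eqref{p3eq} itself. The estimate $t_r \approx 4/\Re(\lambda_u)$ is an engineering approximation for the envelope of a single mode, and making it a rigorous \emph{upper} bound requires arguing that the slowest real part indeed governs the worst-case decay of the full multivariable response and that the residues and initial conditions do not conspire to delay the crossing of the $98\%$ threshold beyond the single-mode estimate. I would address this by bounding the response in a weighted norm adapted to $\mathcal{N}$ and appealing to the dominance of the pole closest to the imaginary axis, in the spirit of \cite{shen2010eigenvalue}.
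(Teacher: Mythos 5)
Your proof is correct and follows essentially the same route as the paper's: the paper likewise obtains \eqref{p3eq} from the exponential envelope condition $e^{-\Re(\lambda_u)t_{ru}}=0.0183$ (the $98\%$ threshold, with $0.0183=e^{-4}$ giving the factor $4$ exactly, rather than your $\ln 50\approx 4$) and derives \eqref{limitp1} by substituting \eqref{def1} into Corollary \ref{t2}. The only difference is one of detail: you make explicit the $AB\sim BA$ similarity identities $\operatorname{spec}(X)=\operatorname{spec}(\mathcal{W}^{-1}\mathcal{R})$ and $\operatorname{spec}(ZX^{-1}Z^\top)=\operatorname{spec}(\mathcal{R}^{-1}\mathcal{P})$, which the paper leaves implicit, and you candidly flag the heuristic single-mode status of the rise-time estimate, which the paper also asserts without further justification.
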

\begin{proof}: The decay ratio of System \eqref{syssaddle} is bounded by $\Re(\lambda_u)$, therefore, expression 
	\begin{equation}\label{p3eqp}
		\exp^{-\Re(\lambda_u) t_{ru}}=0.0183 
	\end{equation}
	calculates the upper bound of the time where all the outputs of the systems have reached $98\%$ of the desired equilibrium point. Expression \eqref{p3eq} follows immediately by rearranging  \eqref{p3eqp}. Finally, \eqref{limitp1} follows immediately from substituting \eqref{def1} in Corollary 1. \\
\end{proof}

\begin{remark}
	S1 can be ensured by using Proposition \ref{p1} while S2 can be ensured with the condition $\frac{1}{4}\frac{\lambda_{max}(\mathcal{R})^2}{\lambda_{min}(\mathcal{W})\lambda_{min}(\mathcal{P})}<1$ from Proposition \ref{p2}.
\end{remark}
\begin{remark}
	Proposition \ref{p3} may be used as a tuning rule in combination with Propositions \ref{p1} or \ref{p2}. For example, note that the pair  $\{\lambda_{\min}{(\mathcal{R})},\lambda_{\max}{(\mathcal{P}})\}$ is used in Proposition \ref{p1} to ensure a ``no-overshoot" behavior, therefore, the pair $\{\lambda_{\max}{(\mathcal{R})},\lambda_{\min}{(\mathcal{P}})\}$ may be used to prescribe the upper bound of the rise time. 
\end{remark}

\begin{remark}
	For implementation purposes, the expression $\lambda_{\min}(\mathcal{\mathcal{X}}^{-1}\mathcal{Y})$ can be approximated with $\frac{\lambda_{\min}(\mathcal{Y})}{\lambda_{max}{(\mathcal{X}})}$, however, this might be conservative since $\frac{\lambda_{\min}(\mathcal{Y})}{\lambda_{max}{(\mathcal{X}})}\leq\lambda_{\min}(\mathcal{\mathcal{X}}^{-1}\mathcal{Y})$. 
\end{remark}
\subsection{Discussion}\label{discussion}
Some additional observations from this section are discussed below:
\begin{enumerate}[label=\roman*),wide, labelwidth=!, labelindent=0pt]
	\item \textbf{About the natural damping}: note that the tuning rules require some knowledge of the natural damping $D(q,p)$ of the system, which can be challenging in practice. Nevertheless, we stress the fact that the tuning rules will work even with a rough estimate as the closed-loop system will remain stable. 
	Some caveats of working with the rough estimate include changes in the oscillatory behavior and deviation of the bounds. Such variations may provide some intuition about the real bounds of the natural damping. For example, when applying Proposition \ref{p1} to achieve a critically damped response, if the system presents an over-damped (resp. under-damped) response, then the real damping value is larger (resp. smaller) than the nominal.
	\item \textbf{Improving the performance of a stable system}: if the open-loop system \eqref{phsys} is \textit{stable}, then, the controller \eqref{pid} can be used to improve its performance. 
	\item \textbf{Underactuated systems}: when $m<n$, the right hand side of \eqref{p1eq} reads as $\lambda_{min}(\mathcal{R})^2 = \lambda_{min}(D_\star)^2$ because $\lambda_{\min}(GK_PG^\top)=0$. Therefore, if $D(q,p)>0$ (i.e. $\lambda_{min}(D_\star)$ is strictly positive), then Proposition \ref{p1} can be applied to reduce the oscillations for an underactuated system.
	\item \textbf{Region of validity}: the proposed tuning rules are based on the linearization of the system. Accordingly, the performance analysis represented in this section is valid only in a neighborhood of $(q_\star, 0_n)$. A method to estimate such a region consists of using the tools designed to estimate the region of attraction of the equilibrium of stable linearized systems. For further details, see \cite{khalil2002nonlinear} or \cite{kloiber2012estimating}.
	\item \textbf{Comparison with the pole placement approach}: the pole placement method is a linear control technique. Therefore, to apply this technique to control a nonlinear system, it is necessary to linearize it. Moreover, the stability results are restricted to the region where the linearization is valid. On the other hand, in the PID-PBC approach, the stability results are based on (nonlinear) Lyapunov analysis, and the linearization is required exclusively to \textit{analyze the performance}--in terms of oscillations, rise time, or damping ratio--of the closed-loop system.
\end{enumerate}

\section{Experimental Results}\label{example}
In this section, the PID-PBC \cite{zhang2017pid} to stabilize a 2 DoF Rigid Planar Manipulator, as shown in Fig.~\ref{man} (see \cite{quanser} for the Quanser reference manual), at the desired equilibrium $(q_\star,0_2)$ with ${q_\star=col(0.6,0.8)}$, and to prescribe a desired behavior to the transient response. 
\begin{figure}[h]
	\centering
	\includegraphics[width=0.2\textwidth]{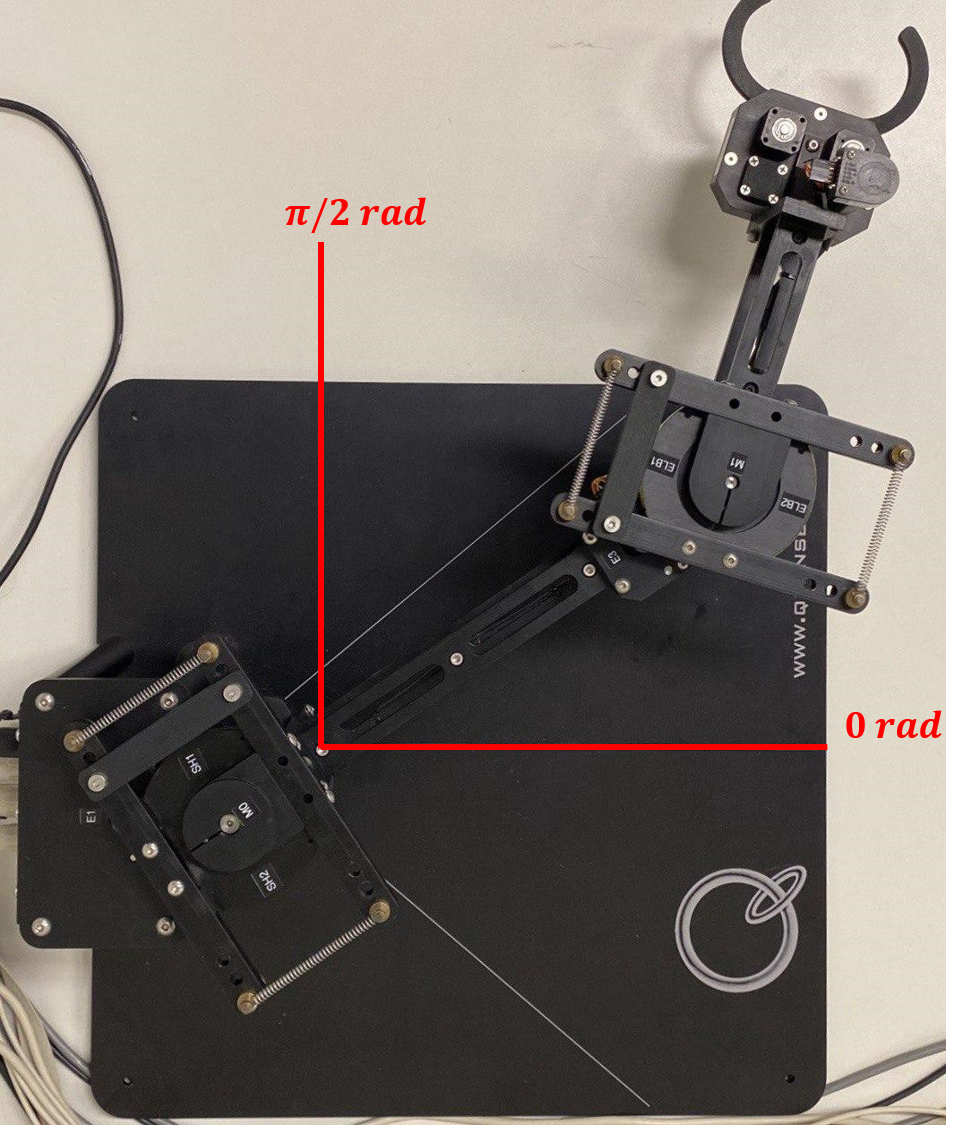}
	\caption{Experimental Setup: 2 DoF Planar Manipulator}\label{man}
	\vspace{-3mm}
\end{figure}

The manipulator model is described as in \eqref{phsys} with $n=2$, $V(q)=0$, $G=I_2$, $D= diag(0.07,0.03)$ and 
\begin{align}
	&M(q)=\begin{bmatrix}
		a_1+a_2+2b\cos(q_2)&a_2+b\cos(q_2)\\
		a_2+b\cos(q_2)&a_2
	\end{bmatrix},
\end{align}
where $a_1=0.1476$, $a_2=0.0725$, and $b=0.0858$.  

To illustrate the applicability of our tuning rules, we first obtain the results for a ``{Random~Tuning~(RT)}'' scenario for comparison purposes. Then, we perform the following experiments:
\begin{enumerate}[label=\textbf{E\arabic*:}]
	\item  system \textit{without} oscillations (Proposition \ref{p1}).
	\item  system \textit{with} oscillations (Proposition \ref{p2} with $0.4\leq~\zeta\leq~0.8$).
\end{enumerate} 
\begin{table}[t]
	\caption{Proportional, Integral and Derivative Gains}\label{gains}
	\centering
	\begin{tabular}{llll}
		\hline
		\multicolumn{1}{c}{} & RT        & E1              & E2              \\ \hline
		$K_P$                 & diag(1,0.5) & diag(7.3972,9.2) & diag(3.9136 4.1710) \\
		$K_I$                 & diag(50,30)     & diag(35,20)     & diag(50,45)     \\
		$K_D$                 & diag(0,0)  & diag(0,0)  & diag(0.08,0.15)  \\ \hline
	\end{tabular}
	\vspace{-2mm}
\end{table}
\begin{table}[t]
	\caption{Experimental rise time vs Nominal rise time}\label{trs}
	\centering
	\begin{tabular}{l|c|l|c|l|cl}
		\hline
		& \multicolumn{2}{c|}{RT}                         & \multicolumn{2}{c|}{E1}                              & \multicolumn{2}{c}{E2}                              \\ \cline{2-7} 
		& L1                         & \multicolumn{1}{c|}{L2} & L1                         & \multicolumn{1}{c|}{L2} & \multicolumn{1}{c|}{L1}    & \multicolumn{1}{c}{L2} \\ \hline
		Experimental (sec) & \multicolumn{1}{l|}{0.662} & 0.274                   & \multicolumn{1}{l|}{1.016} & 2.194                   & \multicolumn{1}{l|}{0.568} & 0.330                  \\ \hline
		Nominal (sec)  & \multicolumn{2}{c|}{3.397}                          & \multicolumn{2}{c|}{1.846}                          & \multicolumn{2}{c}{0.966}                          \\ \hline
	\end{tabular}
	\vspace{-3mm}
\end{table}
\begin{figure}[t]
	\centering
	\includegraphics[width=0.3\textwidth]{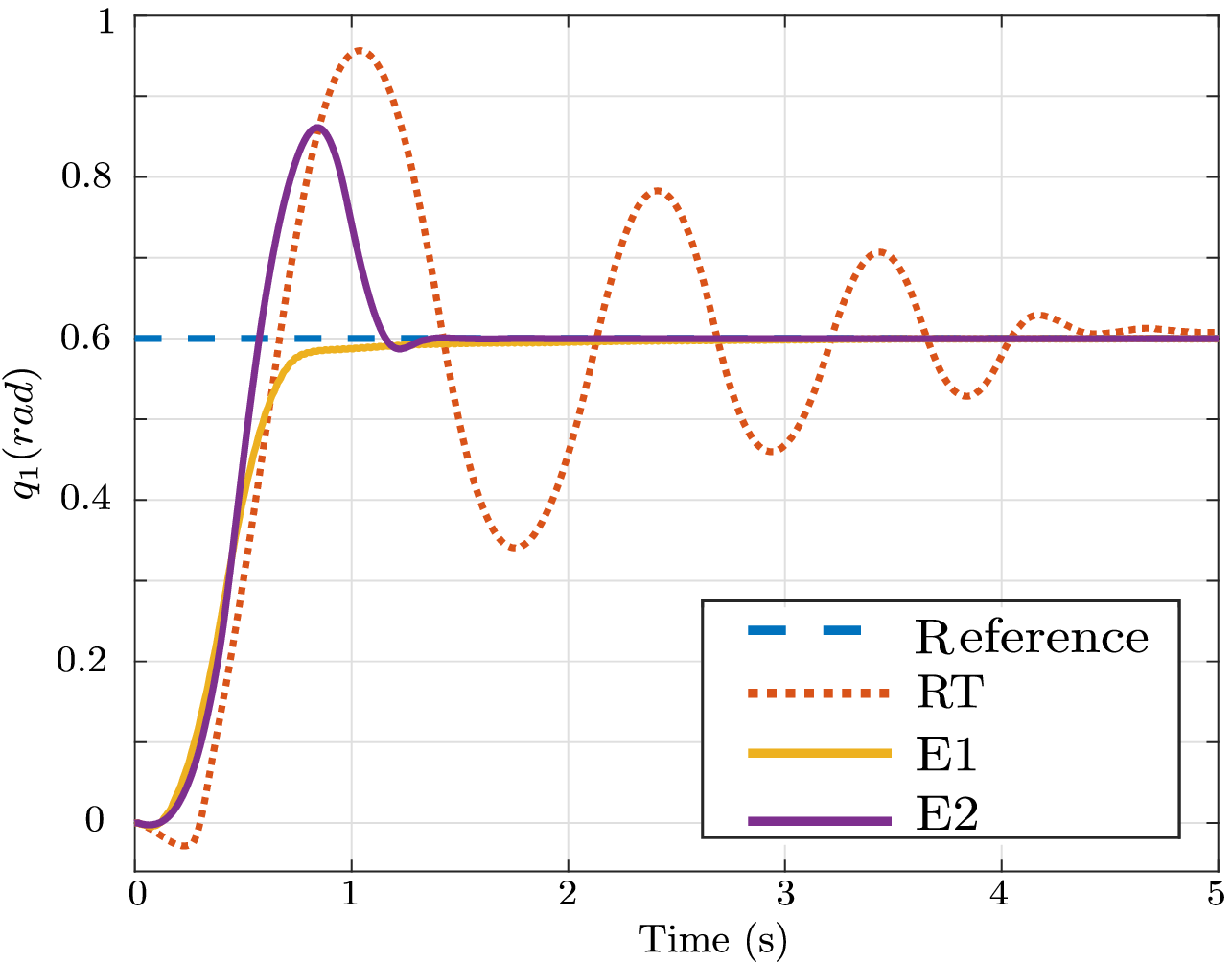}
		\vspace{-3mm}
	\caption{ Trajectories for angular position of L1. }\label{pos}
	\vspace{-3mm}
\end{figure}
\begin{figure}[t]
	\centering
	\includegraphics[width=0.3\textwidth]{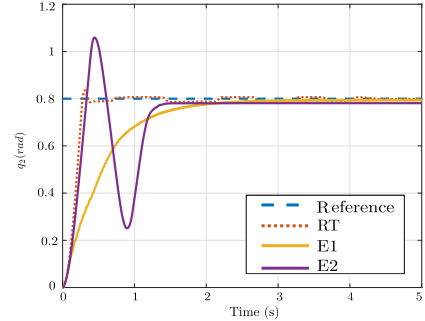}
		\vspace{-3mm}
	\caption{Trajectories for angular position of L2.}\label{pos2}
	\vspace{-6mm}
\end{figure}
Table \ref{gains} contains the gains calculated by using Propositions~\ref{p1} and \ref{p2} for each experiment. Furthermore, Table~\ref{trs} presents the upper bound estimation for the rise time using Proposition \ref{p3} for each experiment. The results of the angular position trajectories with initial conditions $(q,p)=0_4$ for Link 1 (L1) and Link 2 (L2) are shown in Fig.~\ref{pos} and Fig.~\ref{pos2}, respectively \footnote{The steady-state error presented in both figures is due to the nonlinearities no considered for our model, such as internal frictions and the dead zone of motors.}. For the reader convenience, Table~\ref{trs} shows the experimental results for the rise time from both experiments. Furthermore, a video with the experimental results can be found in this link: \url{https://youtu.be/aHPv-mKK_eI}.

Comparing E2 with E1 and RT, it can be seen, particularly in Link 2, that there is a trade-off between oscillations and the rise time, i.e., the faster the rise time, the more overshoot/oscillations the transient response exhibits. Additionally, note that tuning the kinetic term in E2 improves the settling time with respect to the RT scenario. 

Finally, although the nominal values in Table \ref{trs} are conservative, the rise time of each output is upper bounded; therefore, we can ensure that every trajectory has reached the $98\%$ of its final value by the nominal value. However, note that there is a particular case in E1 where the time taken for L2 is larger than the nominal. As mentioned in Section \ref{discussion}, as a consequence of working with a rough estimate of the natural damping, a deviation from the real value may occur. In this particular case, the nominal rise time is given by the expression \\[-2mm]
\begin{equation*}
	t_{ru}=4/\lambda_{\min}(\mathcal{R}^{-1}\mathcal{P})=4\lambda_{\max}(\mathcal{R}\mathcal{P}^{-1}),
\end{equation*}
where it can be seen that the rise time is proportional to the upper bound of the damping matrix $\mathcal{R}$.  Consequently, this rule suggests that the real damping is actually larger than the nominal provided by the manufacturer. 
\section{Concluding remarks and future work} \label{conclusion}
Our results have shown that transforming the pH structure into other coordinates reveals interesting spectral properties that can be used to improve the transient response for the nonlinear mechanical systems. Furthermore, it is clear from the tuning rules that there is an underlying relationship between the potential energy ($\mathcal{P}$), the kinetic energy ($\mathcal{W}$), and the damping ($\mathcal{R}$), which combinations result in a specific transient response. As seen in the experiments, the proposed tuning rules can prescribe the desired performance in terms of the oscillation, the damping ratio, and the rise time to a nonlinear MIMO mechanical systems.

As possible future research, we propose the implementation of damping identification methods in combination with PID-PBCs to apply the proposed tuning rules to a broader range of underactuated mechanical systems. Furthermore, we propose to establish tuning rules to prescribe a desired performance to the closed-loop system without linearizing it. For instance, tuning rules that prescribe an exponential rate of convergence for nonlinear system. Additionally, we aim to extend this methodology to other domains, such as electrical circuits or electro-mechanical systems. 
\bibliographystyle{ieeetr}
\bibliography{ref} 
\end{document}